\newcommand{\qw}[1][-1]{\ar @{-} [0,#1]}
\newcommand{\qwx}[1][-1]{\ar @{-} [#1,0]}
\newcommand{\cw}[1][-1]{\ar @{=} [0,#1]}
\newcommand{\gate}[1]{*+<.6em>{#1} \POS ="i","i"+UR;"i"+UL **\dir{-};"i"+DL **\dir{-};"i"+DR **\dir{-};"i"+UR **\dir{-},"i" \qw}
\newcommand{\meter}{*=<1.8em,1.4em>{\xy ="j","j"-<.778em,.322em>;{"j"+<.778em,-.322em> \ellipse ur,_{}},"j"-<0em,.4em>;p+<.5em,.9em> **\dir{-},"j"+<2.2em,2.2em>*{},"j"-<2.2em,2.2em>*{} \endxy} \POS ="i","i"+UR;"i"+UL **\dir{-};"i"+DL **\dir{-};"i"+DR **\dir{-};"i"+UR **\dir{-},"i" \qw}
\newcommand{\control}{*!<0em,.025em>-=-<.2em>{\bullet}}
\newcommand{\ctrl}[1]{\control \qwx[#1] \qw}
\newcommand{\lstick}[1]{*!R!<.5em,0em>=<0em>{#1}}
\newcommand{\Qcircuit}{\xymatrix @*=<0em>}
\newtheorem{theorem}{Theorem}
\newcommand{\eq}[1]{\hyperref[eq:#1]{(\ref*{eq:#1})}}
\renewcommand{\sec}[1]{\hyperref[sec:#1]{Section~\ref*{sec:#1}}}
\newcommand{\app}[1]{\hyperref[app:#1]{Appendix~\ref*{app:#1}}}
\newcommand{\fig}[1]{\hyperref[fig:#1]{Figure~\ref*{fig:#1}}}
\newcommand{\thm}[1]{\hyperref[thm:#1]{Theorem~\ref*{thm:#1}}}
\newcommand{\lem}[1]{\hyperref[lem:#1]{Lemma~\ref*{lem:#1}}}
\newcommand{\tab}[1]{\hyperref[tab:#1]{Table~\ref*{tab:#1}}}
\newcommand{\cor}[1]{\hyperref[cor:#1]{Corollary~\ref*{cor:#1}}}
\newcommand{\alg}[1]{\hyperref[alg:#1]{Algorithm~\ref*{alg:#1}}}
\newcommand{\defn}[1]{\hyperref[def:#1]{Definition~\ref*{def:#1}}}
\newcommand{\tout}[1]{{}}
\begin{document}

\title{Randomized gap and amplitude estimation}
\author{Ilia Zintchenko}
\affiliation{Quantum Architectures and Computation Group, Microsoft
  Research, Redmond, WA 98052, USA}
\author{Nathan Wiebe}
\affiliation{Quantum Architectures and Computation Group, Microsoft
  Research, Redmond, WA 98052, USA}

\begin{abstract}
We provide a new method for estimating spectral gaps in
low-dimensional systems. Unlike traditional phase estimation, our
approach does not require ancillary qubits nor does it require well
characterised gates. Instead, it only requires the ability to perform
approximate Haar--random unitary operations, applying the unitary
whose eigenspectrum is sought out and performing measurements in the
computational basis. We discuss application of these ideas to in-place
amplitude estimation and quantum device calibration.
\end{abstract}  
\pacs{03.67.Ac, 03.65.Wj, 03.67.Lx}

\maketitle

\section{Introduction}
In recent years a host of methods have been developed for performing phase
estimation in quantum systems~\cite{pe0,pe1,pe2,bonato2015optimized}.  These methods, driven by demand in
quantum algorithms and metrology, have provided ever more efficient means
of learning the eigenvalues of a quantum system.  Through the use of sophisticated ideas
from optimization and machine learning, recent phase estimation methods come
close to saturating the Heisenberg bound~\cite{higgins2007entanglement,hentschel2010machine,bonato2015optimized,pe2}, which is the ultimate performance limit
for any phase estimation algorithm.  

Despite the near-optimality of existing methods, there are a number of avenues of inquiry that still remain
open surrounding phase estimation.  Importantly, recent work has begun to look at operational restrictions
on phase estimation including limited visibility in measurements, de-coherence and time-dependent drift.
Such generalisations are especially important as we push towards building a scalable quantum computer
and are faced with the challenge of characterising the gates in a massive quantum system that is 
imperfectly calibrated.

One important restriction that has not received as much attention is the issue that traditional phase
estimation algorithms require entangling the quantum device with one (or more) ancilla qubits.
This need to couple the system with an external qubit precludes ideas from phase estimation
to be applied to single qubit devices.  Even if ancilla qubits are available, such characterisation
methods would entail the use of entangling gates which are very costly in many quantum computing
platforms.

Our work provides a way to circumvent this problem for small quantum
devices.  It allows an experimentalist to learn the spectral gaps in
an uncharacterised device with an amount of experimental time that is
proportional to that of phase estimation without requiring any
ancillary qubits.  The idea of our approach is reminiscent of
randomised
benchmarking~\cite{DCE09,kimmel2014robust,kimmel2015robust,magesan2011scalable},
and particularly iterated interleaved benchmarking~\cite{sheldon2016iterative}, in that we use random operations to extract
information about the underlying dynamics of the system.

We consider three
applications for this method, which we call~\emph{randomised gap estimation}:
\begin{enumerate}
\item Eigenspectrum estimation for small quantum systems that lack
  well characterised gates.
\item Ancilla--free amplitude estimation.
\item Control map learning for small quantum devices.
\end{enumerate}
This last application is interesting in the context of quantum
bootstrapping~\cite{qp1} because it gives an inexpensive way of
calibrating a set of one and two qubit gates at the beginning of a
bootstrapping protocol. This is significant because an important
caveat in bootstrapping is that while a trusted simulator can be used
to quickly learn how to control an uncalibrated system, calibrating
that trusted simulator requires exponentially more measurements and
polynomially more experimental time than using traditional methods.
Thus our techniques can, under some circumstances, dramatically reduce
the total cost of a bootstrapping protocol.

Our paper is laid out as follows.  We first provide a review of
Bayesian inference in~\sec{bayes} and discuss the method for
approximate Bayesian inference that we use in our numerical studies.
We then introduce randomised gap estimation in~\sec{gap} and discuss
the challenges faced when applying the method to high--dimensional
systems.  We then show how the adiabatic theorem can be used to
eliminate this curse of dimensionality for certain systems
in~\sec{adiabatic}.  We then apply randomised gap estimation to the
problem of amplitude estimation in~\sec{amplitude} and also to
learning a map between experimental controls and the system
Hamiltonian for a quantum device in~\sec{map} before concluding.

\section{Bayesian inference}
\label{sec:bayes}
Bayesian inference is a widely used method to extract information from
a system. The goal of Bayesian inference is to compute the probability
that a hypothesis $x$ is true given evidence $E$ and a set of prior
beliefs about the hypotheses. These prior beliefs are represented as a
probability distribution known as a prior. This prior, along with
evidence $E$, can be thought of as an input to the inference
algorithm. The output of the algorithm is the posterior distribution
which is given by Bayes' theorem as
\begin{equation}
P(x|E) = \frac{P(E|x) P(x)}{\int P(E|x) P(x) \mathrm{d}x},\label{eq:bayes}
\end{equation}
where $P(E|x)$ is known as the likelihood function. The posterior
distribution output by this process is then used as a prior
distribution in online inference algorithms and this process of
generating a new prior distribution using Bayes' rule is known as a
Bayesian update.

As an example of how Bayesian inference plays a role in eigenvalue estimation, 
consider the traditional approach used for iterative phase estimation.
In iterative phase estimation, one wishes to learn the eigenvalues
of a unitary operation $U$ eigenstates using the following circuit:
\[
    \Qcircuit @C=1em @R=1em {
        \lstick{\ket{0}}    & \gate{H}  & \gate{e^{-iM \theta Z}}   & \ctrl{1}   & \gate{H} & \meter & \cw&E \\
        \lstick{\ket{\phi}} & {/} \qw   & \qw                   & \gate{U^M} & \qw      & \qw    & \qw&
    }
\]
where $M$ and $\theta$ are user--specifiable experimental parameters, $H$ is the Hadamard matrix and $Z$ is the Pauli--$Z$ operator.  
Let $\ket{\phi}$ be an eigenstate such that $U\ket{\phi} = e^{i\lambda} \ket{\phi}$.  The likelihood function for this circuit is then
\begin{equation}
P(E=0|\lambda;\theta,M) = \cos^2(M(\theta -\lambda)).
\end{equation}

Note that in the above circuit we assume that $\ket{\phi}$ is an eigenstate.  If it is not then the protocol will allow us to learn an eigenvalue corresponding to an eigenvector within the support of $\ket{\phi}$~\cite{pe0}.

The problem of phase estimation is thus the problem of estimating $\lambda$ given a sequence of different experimental outcomes.
This can be done using~\eq{bayes} starting from an initial prior distribution for $\lambda$ that is uniform over $[0,2\pi)$.
A further benefit is that the posterior variance provides an estimate of the uncertainty in the inferred phase as well as an 
estimate of the most likely value of $\lambda$.
This approach is considered in~\cite{pe1,pe2} wherein exact Bayesian inference is found to perform extremely well
both in settings where $\theta$ and $M$ are chosen adaptively as well as non--adaptively.

If $P(x)$ has support over an infinite number of points then exact
Bayesian inference is usually intractable and discretisations are
often employed to address this problem. Such discretisations include
particle filter methods and sequential Monte Carlo
methods~\cite{BK98,DGA00,LW01}.  Here we employ a newly
developed approach known as rejection filter inference~\cite{pe2,WGKS15}.

 The idea behind Rejection filtering is to use rejection
sampling to convert an ensemble of samples from the prior distribution
to a smaller set of samples from the posterior distribution.
Specifically, if evidence $E$ is observed and we draw a sample from
the prior distribution and accept it with probability equal to
$P(E|x)$ then the probability that hypothesis $x$ is accepted as a
sample is from Bayes' theorem
\begin{equation}
P(E|x) P(x) \propto P(x|E)
\end{equation}
Therefore, the samples that pass through the rejection filter are
distributed according to the posterior distribution.

Although this process allows us to sample from the posterior
distribution, it is not efficient. This is because every time we
perform an update, we will, on average, reject a constant fraction of
samples. This means that the number of samples kept will shrink
exponentially with the number of updates. We can, however, make this
process efficient by fitting these samples to a family of
distributions and then draw a new set of samples from this model
distribution for the next update. This ability to regenerate samples
allows rejection filtering inference to avoid this catastrophic loss
of support for the approximation to the posterior.

There are a number of models for the prior and posterior that can be
considered. Here we use a unimodal Gaussian distribution. 
Gaussian models for the posterior distribution provide a number of
advantages. First, they are parametrised by the posterior mean and
covariance matrix which give an estimate of the true hypothesis and
the uncertainty in it. Furthermore, these quantities are easy to
estimate from the accepted samples and can be computed incrementally,
which allows our algorithm to be executed using near-constant memory.

There are several advantages to using rejection filtering for approximate inference.
Specifically, it is very fast,
easy to parallelise, can be implemented using far less memory than
particle filter or sequential Monte-Carlo methods.  Perhaps most importantly, it is also 
substantially easier to implement than traditional particle filters. Rejection filtering has also
been successfully used in phase estimation algorithms~\cite{pe2}, where the use
of rejection filtering leads to substantial reductions in the
experimental time needed to perform the inference relative to Kitaev's
phase estimation algorithm~\cite{pe0} and information theoretic phase
estimation~\cite{pe1}. 

\section{Randomized gap estimation}
\label{sec:gap}

\begin{algorithm}[t!]
  \caption{Bayesian inference algorithm for eigenphases}
  \label{alg:inference0}
  \begin{algorithmic}
    \Require Set of unitary operators $\{U: j=1:K\}$, evolution time
    $t$, prior over gaps $P(\Delta)$.

    \State Prepare state $\ket{0}\in \mathbb{C}^N$.

    \State Randomly select $U$ from set of unitary operations.

    \State $\ket{0}\gets U^\dagger e^{-iHt} U \ket{0}$.

    \State Measure state and $E\gets 0$ if the result is '0' otherwise
    $E\gets 1$.

    \State Use approximate Bayesian inference to estimate
    $P(\Delta|E)\propto P(\Delta) \cdot \int P(E|\Delta;t,U)
    \mu(U)\mathrm{d}U$.

    \State\Return $P(\Delta|E)$.
  \end{algorithmic}
\end{algorithm}

Traditional approaches for learning the eigenspectrum of a Hamiltonian
require ancillary qubits and well characterised gates. Here we present
an approach, which we call randomised gap estimation, to efficiently
estimate the eigenspectrum of a small system with no ancillary qubits
and potentially poorly characterised gates. The idea behind this
approach is to use Bayesian inference in concert with random
evolutions to infer the gaps between the eigenphases of a unitary
operator. 

In the first step, the state
\begin{equation}
|\Psi\rangle:=U \ket{0} = \sum_k \beta_k |v_k\rangle
\end{equation}
is prepared, where $\ket{v_k}$ is the eigenvector corresponding to
eigenvalue $\lambda_k$ of $H$ in an arbitrary ordering. Here $\beta_k$
are unknown parameters that depend not only on the random unitary
chosen, but also the eigenbasis of $H$. For low--dimensional systems
exact methods for drawing $U$ uniformly according to the Haar measure
are known (more generally it suffices to draw $U$ from a unitary
$2$--design see~\sec{haar_random} for more details). For example, a
single qubit system $U$ has an Euler angle decomposition of
\begin{equation}
U = R_z(\phi)R_x(\theta)R_z(\psi),
\end{equation}
up to an irrelevant global phase.  

Next, we evolve the system according to $e^{-iHt}$ for a controlled
time $t$. This results in the state
\begin{equation}
e^{-iHt} \ket{\Psi}=e^{-iHt}U \ket{0}.
\end{equation}

Finally, $U^\dagger$ is applied and a measurement in the
computational basis is performed, which returns '0' with probability
\begin{align}
P(0|H;t,U)&:=|\bra{0}U^\dagger e^{-iHt} U \ket{0}|^2\nonumber\\
 &= \left( \sum_k |\beta_k|^2 \cos{(\lambda_k t)} \right)^2 + \left( \sum_k |\beta_k|^2 \sin{(\lambda_k t)} \right)^2\nonumber\\
&= \sum_{ij} \cos{((\lambda_i -\lambda_j)t)}|\beta_i|^2|\beta_j|^2:=P(0|\Delta;t,U).
\end{align}
Note that the gaps $\Delta_{ij}= \lambda_i-\lambda_j$ cannot be easily
learned from this expression because of the unknown $\beta_i$ and
$\beta_j$ terms. Haar averaging provides a solution to this problem. 

Given an unknown Haar--random unitary $U$, the likelihood of
measuring '0' is given by the law of conditional probability
\begin{align}
P(0|\Delta;t)=\int P(0|\Delta;t,U) \mu(U) \mathrm{d}U,
\end{align}
where $\mu$ is the Haar--measure over $U(N)$. This probability is
distinct from the likelihood that would be used if the user knew, or
was capable of computing, $P(0|H;t,U)$ for the particular $U$ that
was used in the experiment.

If we define $\Delta$ to be a matrix such that
$\Delta_{i,j}:=\lambda_i-\lambda_j$, this Haar average evaluates to
\begin{align}
P(0|\Delta;t) :&=N\langle |\beta_i|^4 \rangle + \langle |\beta_i|^2
|\beta_j|^2 \rangle_{i\neq j} \sum_{i\neq j} \cos{(\Delta_{ij}t)} \nonumber\\
&=
\frac{2}{N+1}\left( 1 + \frac{1}{N}\sum_{i>j}\cos{(\Delta_{ij}t)}
\right).\label{eq:bageq}
\end{align}
Eq.~\eq{bageq} provides a likelihood function that can be used to
perform Bayesian inference. In particular, if a binary experiment is
performed wherein the only two outcomes are $\ket{0}$ and $\ket{v\ne
  0}$, the latter occurs with probability $1-P(0|H;t)$. If we define
our prior distribution over eigenvalues to be $P(\lambda)$, then given
a measurement of '0' is recorded, Bayes' rule states that the
posterior distribution is
\begin{equation}
P(\lambda|0;t) =\frac{2\left( 1 +
  \frac{1}{N}\sum_{i>j}\cos{(\Delta_{ij}t)}\right)P(\lambda)}{(N+1)P(0)},
\label{eq:likelihood}
\end{equation}
where $P(0)$ is a normalisation factor. We outline this approach in
\alg{inference0}. Thus Bayesian inference allows the gaps to be
learned from such experiments. 

The Cram\'er--Rao bound provides an estimate for the minimum number of
experiments and/or experimental time needed to ensure that the
variance of $\Delta_{ij}$ is sufficiently small~\cite{Cra45}.  If we
assume that $R$ experiments are performed, each with $D\in O(1)$
outcomes and evolution time at least $t$, then the elements of the
Fisher matrix are
\begin{equation}
I_{ij,kl} = \sum_{d_1=0}^{D-1}\cdots \sum_{d_R=0}^{D-1}
\frac{\partial_{\Delta_{ij}} \prod_{q=1}^R
  P(d_q|\Delta)\partial_{\Delta_{kl}} \prod_{q=1}^R
  P(d_q|\Delta)}{\prod_{q=1}^RP(d_q|\Delta)}\in O\left(\frac{R^2
  t^2}{N^2} \right),\label{eq:fisher}
\end{equation}
which through the use of the Cram\' er--Rao bound implies that the
variance of any unbiased estimator of $\Delta_{ij}$ after all $R$
experiments scales at least as $\Omega(N^2/R^2  t^2)$. 

While this shows that the number of experiments needed to learn the
gaps grows at least linearly with $N$, the uncertainty in the optimal
unbiased estimator also shrinks as $\Omega(1/T)$, where $T$ is the
total evolution time. The error scaling is hence quadratically smaller
than would be expected from statistical sampling. This opens up the
possibility of very high precision frequency estimates for small
quantum devices.

\subsection{Implementing Haar random unitaries}
\label{sec:haar_random}

In practice, Haar-random unitaries can be hard to exactly implement on a
quantum computer~\cite{ZK94}. This task is even more challenging in an
uncalibrated device. Fortunately, efficient methods for implementing
pseudo--random unitaries are well known.  Here we use the definition
of $t$--design given by Dankert et al~\cite{DCE09}, wherein a unitary
$t$--design on $N$ dimensions is taken to be a finite set of unitary
operations such that the average of any polynomial function of degree
at most $t$ in the matrix elements and their complex conjugates agrees
with the Haar average. In particular, let $P_{(t,t)}(U_k)$ be such a
polynomial function and assume that there are $K$ elements in this
design. Then
\begin{equation}
\frac{1}{K} \sum_{k=1}^K P_{(t,t)}(U_k) = \int P_{(t,t)}(U) \mu(U)
\mathrm{d}U,
\end{equation}
where $\mu$ is the Haar--measure.  

More generally, one can also consider the notion of an
$\epsilon$--approximate $t$--design.  This notion differs from a
$t$--design in that strict equality is not required and a discrepancy
of at most $\epsilon$ in the diamond--distance is permitted.
Specifically, let $G_W(\rho)=\sum_i U_i^{\otimes k} \rho
(U_i^\dagger)^{\otimes k}$ be a super-operator corresponding to
twirling the input operator over the elements of the design and
$G_H(\rho)=\int U^{\otimes k} \rho (U^\dagger)^{\otimes k}
\mu(U)\mathrm{d}U$ be the same quantity averaged over the
Haar--measure. Then the set of unitary operators forms a
$\epsilon$--approximate $k$--design if $\|G_W -G_H\|_\diamond \le
\epsilon$, where the diamond norm is discussed in detail
in~\cite{watrous2011theory}.

This definition implies that the Haar--expectations of the terms
in~\eq{bageq} can be found using a $2$-design since all such terms are
quadratic in the matrix elements of $U$ and their complex
conjugates. The remaining question is how to form such a design.

Dankert et al. show that random Clifford operations can be used to
form exact and $\epsilon$--approximate $2$-designs using $O(n^2)$
and $O(n\log(1/\epsilon))$ gates respectively~\cite{DCE09}. These results
are sufficient in settings where the device has access to well
characterised Clifford gates. If the system does not have such gates,
the result of Harrow and Low~\cite{HL09} can be
applied to show that $\epsilon$--approximate $t$--designs can be
formed out of random sequences of gates taken from any universal gate
set. The number of gates needed to generate such a pseudo--random
unitary is $O(n(n+\log(1/\epsilon)))$~\cite{HL09} in such cases, where
the multiplicative constant depends on the details of the universal
gate set used.

The result of Harrow and Low is especially significant here because
the gates do not need to be known for it to hold (however we still require that the exact inverses
of the uncalibrated gates can be performed). This means that even
if we want to use randomised gap estimation to learn how to control certain
uncalibrated quantum systems then we do not need to explicitly know the
gates that are actually applied to the system to sufficiently
randomise the measurement results if the underlying gate set is universal and a sufficiently long sequence of random operations is used.

\subsection{Numerical tests}
While the prior argument suggests that the error should scale as $1/T$
for randomised gap estimation, an important question remains regarding
how well it actually scales in practice. We assess this by using
rejection filtering to estimate the gaps and the particle guess
heuristic, discussed in~\cite{qp2,qp3,qp1} and also the appendix, to
set the time for each experiment. We use this approach because it is
fast, accurate and more importantly easier to implement than
sequential Monte--Carlo methods~\cite{qp4,stenberg2014efficient}.

Although performing rejection filtering to infer the gaps may seem
straight forward, a complication emerges that makes it conceptually
more challenging to apply the technique directly. To see this,
consider a Gaussian prior over the eigenvalue gaps for an
$N$--dimensional system. If the eigenvalue gaps are drawn
independently, the result will with high probability have inconsistent
eigenvalue gaps. By inconsistent we mean that if $\Delta_{43} =
\lambda_4 - \lambda_3$ and $\Delta_{32} = \lambda_3-\lambda_2$ are
eigenvalue gaps, then $\Delta_{42} = \lambda_4-\lambda_2 = \Delta_{43}
+ \Delta_{32}$ must also be one of the eigenvalue gaps for the system.
If $\Delta_{43}$ and $\Delta_{32}$ are chosen independently from a
Gaussian prior, then it is very unlikely that their sum is
also. Hence, gaps chose independently from the prior distribution will
not correspond to a feasible set of eigenvalues.

\begin{figure}[t!]
  \centering
  \includegraphics[width=0.32\columnwidth]{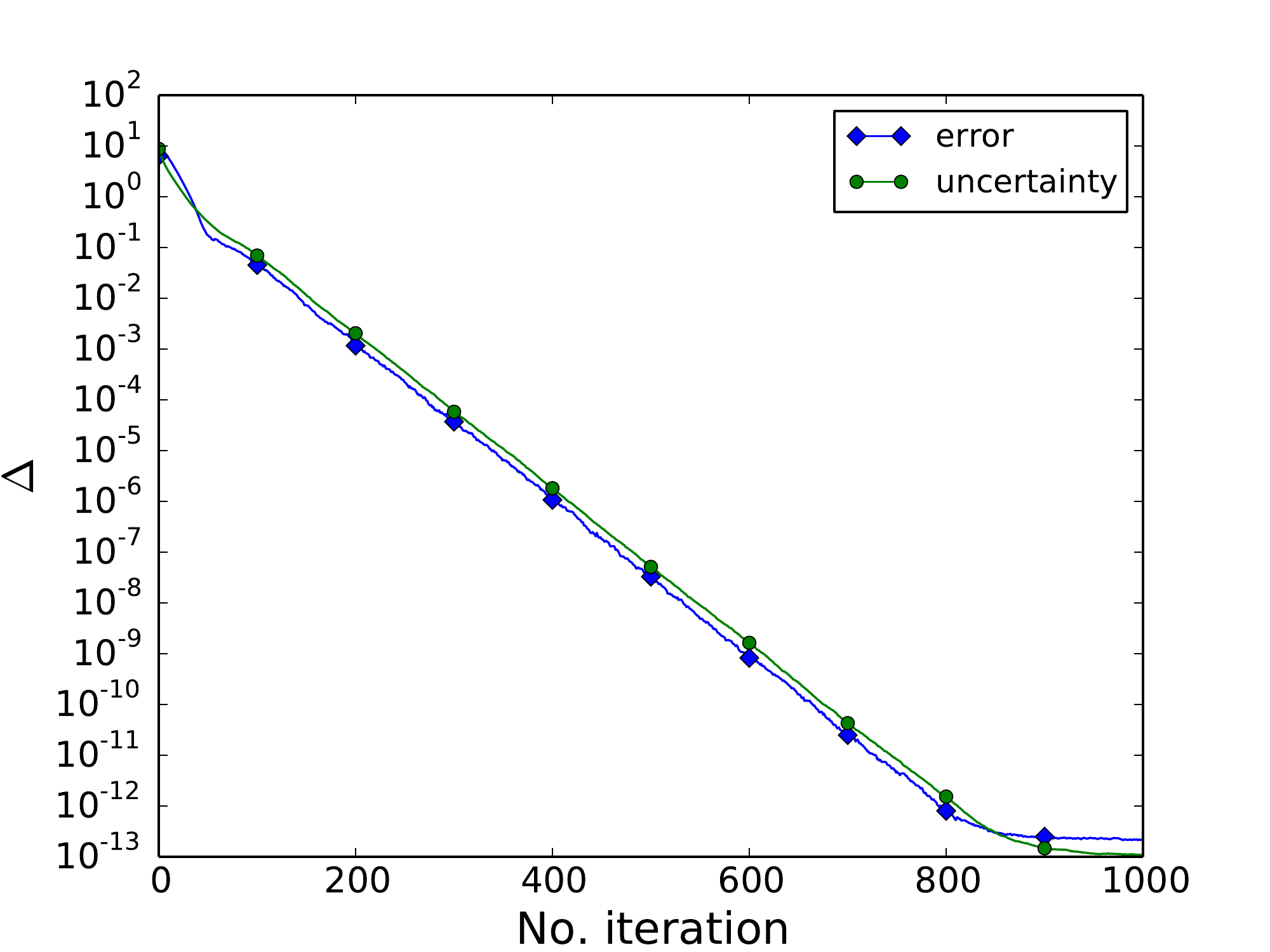}
  \includegraphics[width=0.32\columnwidth]{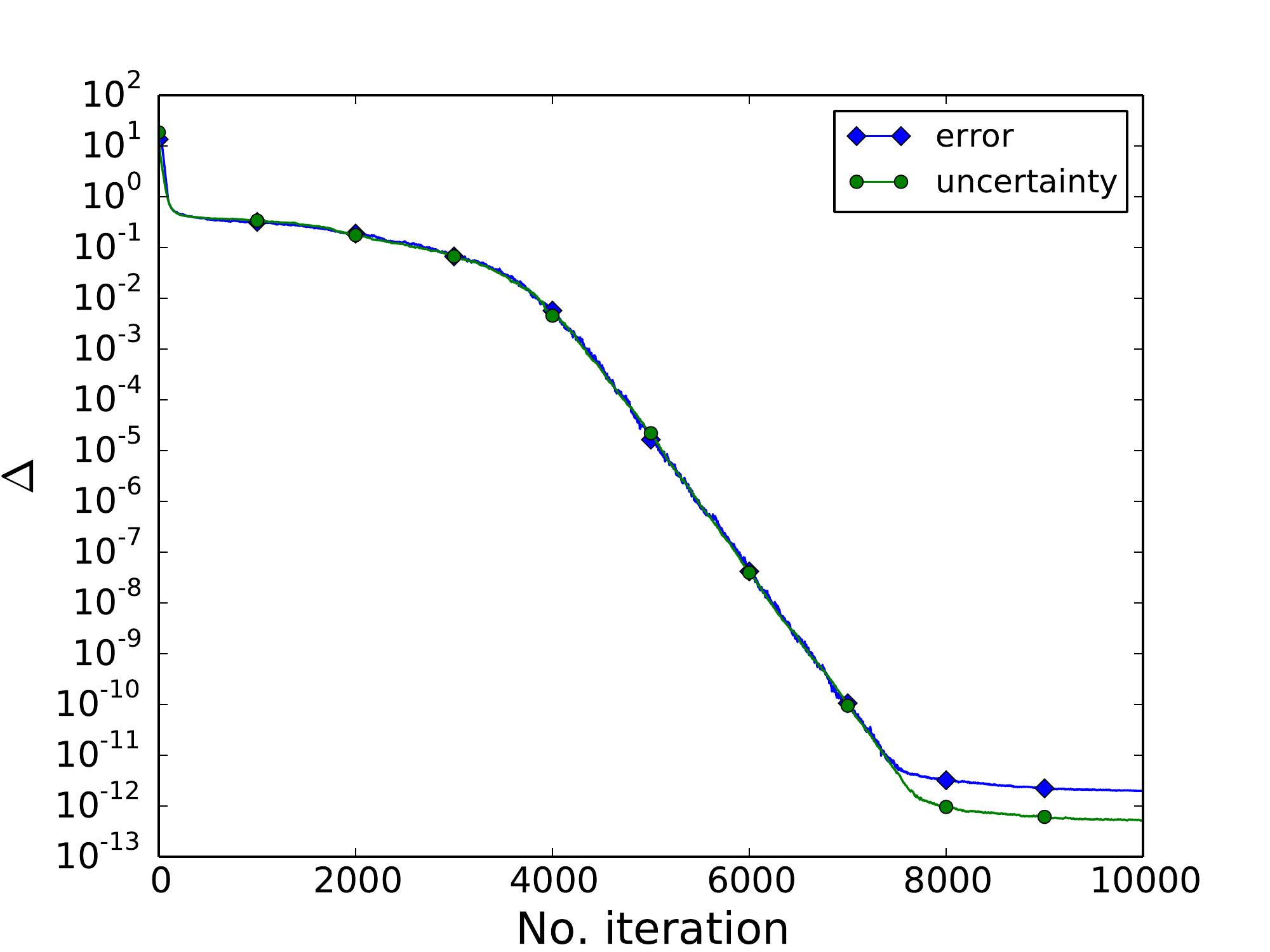}
  \includegraphics[width=0.32\columnwidth]{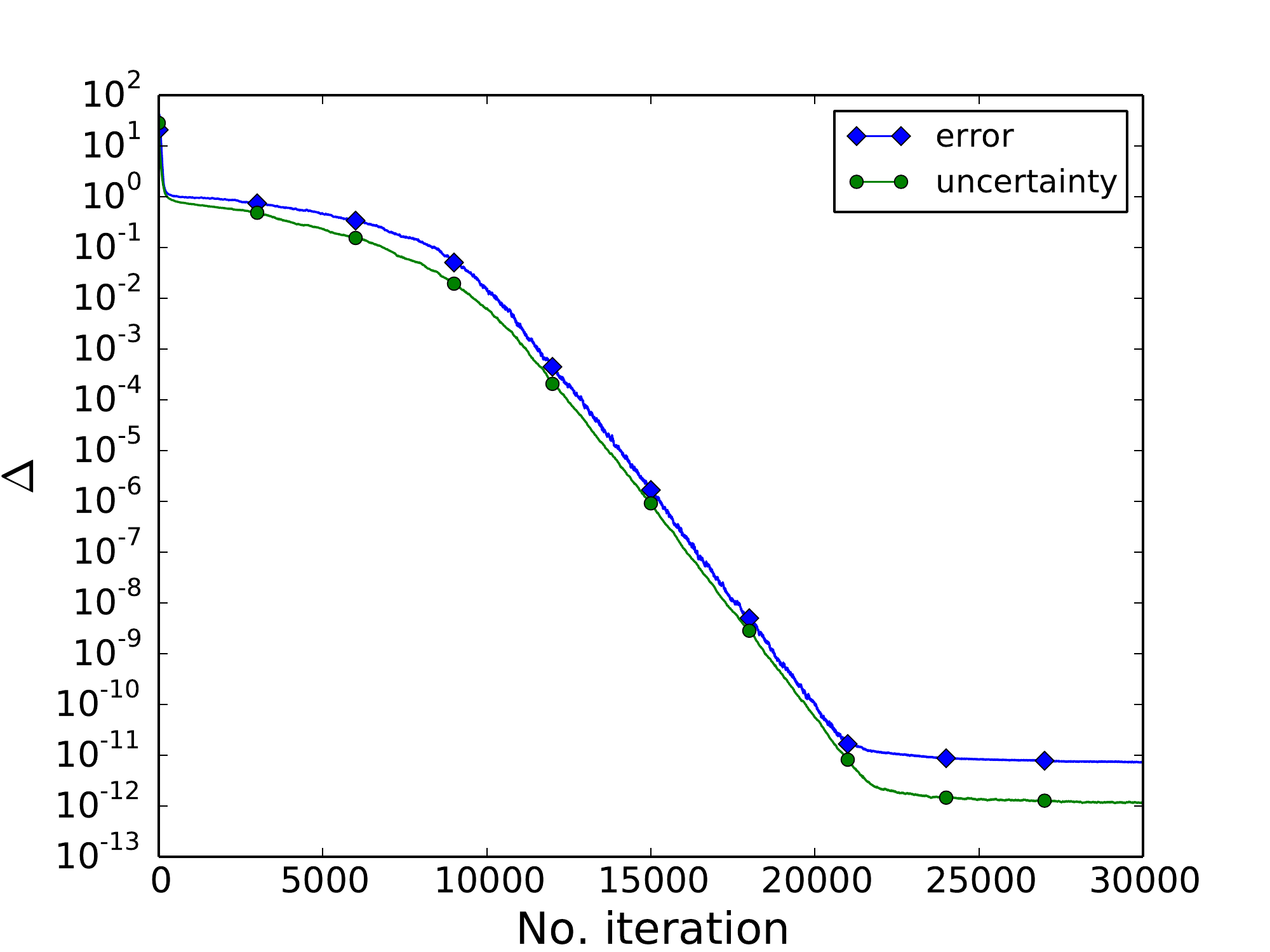}
  \caption{Median error and uncertainty in the eigenvalues computed
    using randomised gap estimation using rejection filtering for
    2-level (left), 3-level (middle) and 4-level (right) Hamiltonians
    with eigenspectra drawn from the Gaussian Unitary
    Ensemble.}
\label{fig:bag_of_gaps}
\end{figure}

There are many ways that this self-consistency constraint could be
imposed in the inference step, but perhaps the simplest approach is to
use Bayesian inference to learn a set of consistent eigenvalues for
the system, rather than gaps directly. The solution is not
unique~\cite{SSL90} as we discuss further in~\app{turnpike}. However,
any gap inferred from the potentially erroneous spectrum will be
automatically consistent. We enforce this by taking the
lowest--eigenvalue to be zero without loss of generality and then
define the error in the inference as
\begin{equation}
\Delta=\textnormal{min}\left\{\sum_i|\lambda_i -
\mu_i|,\sum_i|\lambda_i - (\textnormal{max}\{\mu_i\}_i-\mu_i)|\right\}
\label{eq:error}
\end{equation}
and the uncertainty as 
\begin{equation}
\sigma = \sqrt{\textnormal{Tr}(\Sigma)},
\label{eq:uncertainty}
\end{equation}
given that rejection filtering reports a covariance matrix $\Sigma$
with a mean of $\mu_i$ for each eigenvalue $i$. 
The minimisation of
the error in~\eq{error} is meant to remove the mirror symmetry that
naturally arises as any sorted eigenspectrum gives the same
distribution of gaps as its reflection about the midpoint
(see~\app{turnpike}). We find numerically that Bayesian inference can
rapidly estimate the eigenvalues of an unknown Hamiltonian, up to a
degeneracy, in one and two qubit systems (see~\fig{bag_of_gaps}). In
particular, we see that the error shrinks exponentially with the
number of measurements before being ultimately limited by
machine--$\epsilon$. This illustrates that randomised gap estimation
can learn the eigenvalue gaps for low--dimensional systems using
exponentially less data than would ordinarily be required by
statistical sampling, although with only a quadratic advantage in the
total experimental time.

For higher--dimensional systems achieving the same task is expected to
be much more daunting and the error metric in~\eq{error} should be the
minimum over all consistent re--orderings of the spectrum. Such
problems can, however, be sidestepped in systems with a gapped
adiabatic path connecting the Hamiltonian to a efficiently diagonalizable Hamiltonian, as we show
below.

\section{Adiabatic elimination of eigenvalues}
\label{sec:adiabatic}
\begin{algorithm}[t!]
  \caption{Bayesian inference algorithm for eigenphases}
  \label{alg:inference1}
  \begin{algorithmic}
    \Require Set of unitary operators $\{U: j=1:K\}$ such that
    $U=\openone\otimes V_j$ where $V$ acts on a $M\le N$ dimensional
    subspace of the computational basis, a diagonal Hamiltonian $H_0$,
    annealing time $T$, an interpolation function $f:[0,1]\mapsto [0,1]$
    such that $f(0)=1-f(1)=0$, evolution time $t$ and prior over $M$
    smallest gaps gaps $P(\Delta)$.

    \State Prepare state $\ket{0}\in \mathbb{C}^N$.

    \State Randomly select $U$ from set of unitary operations.

    \State $\ket{0}\gets U^\dagger \left(\mathcal{T} e^{-i \int_0^1
      (1-f(s))H_0 +f(s)H \mathrm{d}s T}\right)^\dagger e^{-iHt}
    \left(\mathcal{T} e^{-i \int_0^1 (1-f(s))H_0 +f(s)H \mathrm{d}s
      T}\right)U \ket{0}$.

    \State Measure state and $E\gets 0$ if the result is '0' otherwise
    $E\gets 1$.

    \State Use approximate Bayesian inference to estimate
    $P(\Delta|E)\propto P(\Delta) \cdot \int P(E|\Delta;t,U) \mu(U)\mathrm{d}U$.

    \State\Return $P(\Delta|E)$.
  \end{algorithmic}
\end{algorithm}

Most eigenvalue estimation tasks in quantum computing focus on
learning only a part of the spectrum rather than all the
eigenvalues. We could use randomised gap estimation to learn the
eigenvalues in this part of the spectrum if we could apply
Haar--random unitary operations in \emph{only} that subspace. Since
the eigenvectors of the Hamiltonian are generally not known, it is
difficult to do this directly. Despite this fact, the adiabatic
theorem provides a means by which these random unitaries can be
applied to the appropriate subspace by employing adiabatic state
preparation.

Let us assume we wish to learn the spectrum, up to an additive
constant, for a given subspace of a Hamiltonian $H_p$. In particular
let $S = {\rm span}(\ket{\lambda_{j_1}},\ldots,\ket{\lambda_{j_m}})$
be a subspace of eigenvectors of $H_p$ such that the eigenvalues obey
$\lambda_1\le \lambda_2 \le \cdots$ and $j$ is a monotonically
increasing sequence on $\{1,\ldots, 2^n\}$. Then we define an
adiabatic interpolation to be a time--dependent Hamiltonian of the
form $H(s)$ such that $H(0)=H_0$ and $H(1)=H_p$ where $H(s)$ is at
least three-times differentiable for all $s\in (0,1)$ (where $s$ is
the dimensionless evolution time) and furthermore has a gapped
spectrum for all times in this interval. A common example of such a
Hamiltonian is
\begin{equation}
H(s)=(1-s)H_0 + sH_p.
\end{equation}
The adiabatic theorem~\cite{Kat50,CHW11,EH12} shows that if
$\ket{\lambda_{j}^0}$ is an eigenvector of $H_0$ whose eigenvalue
corresponds to that of $\ket{\lambda_j}$ in the sorted list of
eigenvalues,
\begin{equation}
\mathcal{T}e^{-i\int_0^1 H(s) \mathrm{d}s T}\ket{\lambda_{j_k}^0} =
e^{i\omega_{j_k}T}\ket{\lambda_{j_k}}+O(1/T),\label{eq:adiabatic}
\end{equation}
where $\mathcal{T}$ is the time-ordering operator. Hence, if we can
perform a Haar--random unitary on the subspace of eigenvectors of
$H_0$, denoted by $S_0={\rm
  span}(\ket{\lambda_{j_1}^0},\ldots,\ket{\lambda_{j_m}^0})$, we can
adiabatically transform the resultant state to a Haar--random state in
$S$, up to phases on each of the eigenvectors and error $O(1/T)$.

Now let $U$ be a Haar--random unitary acting on $S_0$. Then the new
adiabatic protocol for estimating the eigenphase is
\begin{equation}
P(0|H;t,U)=\left|\bra{0}U^\dagger \left(\mathcal{T}e^{i\int_1^0
  H(s) \mathrm{d}s T}\right)e^{-iH_pt}\left(\mathcal{T}e^{-i\int_0^1
  H(s) \mathrm{d}s T}\right)U\ket{0}\right|^2.\label{eq:adprob}
\end{equation}
Given $U \ket{0} = \sum_{j:\ket{\lambda_j}\in S_0} \alpha_j
\ket{\lambda_j^0}$ and $\tilde U \ket{0} :=
\sum_{j:\ket{\lambda_j}\in S} \alpha_j
\ket{\lambda_j}$,~\eq{adiabatic} shows that
\begin{equation}
\left(\mathcal{T}e^{-i\int_0^1 H(s) \mathrm{d}s T}\right)U \ket{0} =
\sum_{j:\ket{\lambda_j}\in S_0} \alpha_je^{i\omega_j T}
\ket{\lambda_j}+O(1/T).  \label{eq:adsubs}
\end{equation}
Then using the fact that $H_p \ket{\lambda_j}
=\lambda_j\ket{\lambda_j}$, we see from~\eq{adprob} and~\eq{adsubs}
that
\begin{align}
P(0|H;t,U)&= \left( \sum_{k:\ket{\lambda_k} \in S} |\alpha_k|^2
\cos{(\lambda_k t)} \right)^2 + \left( \sum_{k:\ket{\lambda_k} \in S}
|\alpha_k|^2 \sin{(\lambda_k t)} \right)^2+O(1/T).\nonumber\\ &=
\left|\bra{0}\tilde U^\dagger e^{-iH_pt}\tilde
U\ket{0}\right|^2+O(1/T).
\end{align}
The adiabatic theorem can therefore be used to allow randomised gap
estimation to be performed on a specified subspace of eigenvalues if
there exists a gapped adiabatic path between a Hamiltonian that is
diagonal in the computational basis and the problem Hamiltonian
(see~\alg{inference1} for an outline of this approach). This shows
that the curse of dimensionality that afflicts this method can be
excised in cases where such an adiabatic evolution is
possible. Randomized gap estimation can thus, in principle, be used as
a fundamental tool to characterise and calibrate untrusted quantum
systems. We will examine this further in subsequent sections.

As a final note, the asymptotic scaling of the error that emerges
because of diabatic leakage out of the eigenstates, or more generally
eigenspaces, can be exponentially improved using boundary cancellation
methods~\cite{LRH09,RPL10,WB12,KW14}, which set one or more
derivatives of the Hamiltonian to zero at the beginning and end of the
evolution. However, while this can substantially reduce the cost of
the adiabatic transport, it requires fine control of the system
Hamiltonian in order to realise the benefits of the
cancellation~\cite{WB12}. As a result, it is not necessarily clear when
the higher-order versions of these algorithms will find use in
applications outside of fault tolerant quantum computing.

\section{Application to amplitude estimation}
\label{sec:amplitude}
Randomised gap estimation also provides an important
simplification for amplitude estimation, which is a quantum algorithm
in which the probability of an outcome occurring is estimated by
combining ideas from amplitude amplification and phase
estimation~\cite{BHMT02}. The algorithm quadratically reduces the number of queries
needed to learn a given probability.  The main significance of this is that it quadratically speeds up Monte--Carlo algorithms.  Specifically,
imagine that you are given an unknown quantum state of the form
\begin{equation}
  \ket{\psi}:=A\ket{0} = a\ket{\phi} + \sqrt{1-|a|^2}\ket{\phi^\perp},
  \label{eq:Aeq}
\end{equation}
where $\ket{\phi}\in \mathbb{C}^N$, $A$ is a unitary operator and $|a|
\in (0,1)$. Furthermore, assume that you are given access to an oracle
such that $\chi\ket{\phi}=-\ket{\phi}$ and for all states $\ket{v}$
orthogonal to $\ket{\phi}$, $\chi\ket{v} = \ket{v}$. Amplitude
estimation then allows $|a|^2$ to be estimated to within error
$\epsilon$ using $\tilde O(1/\epsilon)$ applications of $\chi$ and
$O(\log(1/\epsilon))$ measurements.

In order to understand how this works, consider the Grover search
operator $Q=-A\chi_0A^\dagger \chi$, where $\chi_0$ acts as $\chi$
would for $\phi=0$. For the case of a single marked state we then have
that
\begin{equation}
Q^j \ket{\psi} = \sin([2j+1]\theta_a)\ket{\phi} +
\cos([2j+1]\theta_a)\ket{\phi^\perp},
\end{equation}
where $\theta_a := \sin^{-1}({a})$.  It is then clear that $Q$ enacts
a rotation in this two--dimensional subspace and has eigenvectors
\begin{equation}
\ket{\psi_\pm} = \frac{1}{\sqrt{2}} \left(\ket{\phi} \pm i
\ket{\phi^\perp} \right),
\end{equation}
where
\begin{equation}
Q\ket{\psi_\pm} = e^{\pm i 2\theta_a}\ket{\psi_\pm}.
\end{equation}
All other eigenvectors in the space orthogonal to this have eigenvalue
$\pm 1$.

We can learn these phases using randomised gap estimation.
The most significant change is that here $t$ must be taken to be an
integer since fractional applications of the Grover oracle have not
been assumed. In this case, the eigenvalues are in the set
$\{-2\theta_a,0,2\theta_a, \pi\}$, which implies that $|\Delta_{i,j}|$
takes the values $\{0,\pm 2\theta_a,4\theta_a, \pi \pm 2\theta_a,\pi\}$.
This means that the gap estimation process has very different
likelihoods if $t$ is even or odd.  Since $\cos([\pi \pm
  2\theta_a](2p+1)) = -\cos(2\theta_a(2p+1))$, for integer $p$, many
of the terms in~\eq{bageq} cancel if $t$ is odd.  As a result, it is
better to choose $t$ to be an even integer for this application, in
which case the likelihood function is
\begin{equation}
P(0|\theta_a;t)=\frac{2}{N+1} \left(1 +
\frac{1}{N}\left(\binom{N-2}{2}+2(N-2)\cos(2\theta_a t)+
\cos(4\theta_a t) \right) \right).
\end{equation}
Such experiments cannot yield substantial information as $N\rightarrow
\infty$ according to~\eq{fisher} and thus different ideas are needed
to make large $N$ instances tractable unless we can sample according
to the Haar measure only within the relevant subspace using techniques
similar to those in~\sec{adiabatic}.

This method is, however, viable without modification for small $N$. To
illustrate this, consider the case where $N=2$, where the likelihood
function reduces to
\begin{equation}
P(0|\theta_a; t) = \frac{2}{3}\left(\frac{1}{2} + \cos^2(2\theta_a t)
\right).\label{eq:PE2}
\end{equation}
This likelihood function yields only slightly less information than the one for
iterative phase estimation, $P(0|\theta_a;t) = \cos^2(2\theta_a t)$,
as their derivatives with respect to $\theta_a$ differ by at most a
constant factor. This observation, coupled with the fact that an
additional qubit is not needed, means that amplitude estimation can be
realistically applied in-place in single qubit devices using this
method. 

\section{Application to control map learning}
\label{sec:map}
\begin{figure}[t!]
    \includegraphics[width=0.98\columnwidth]{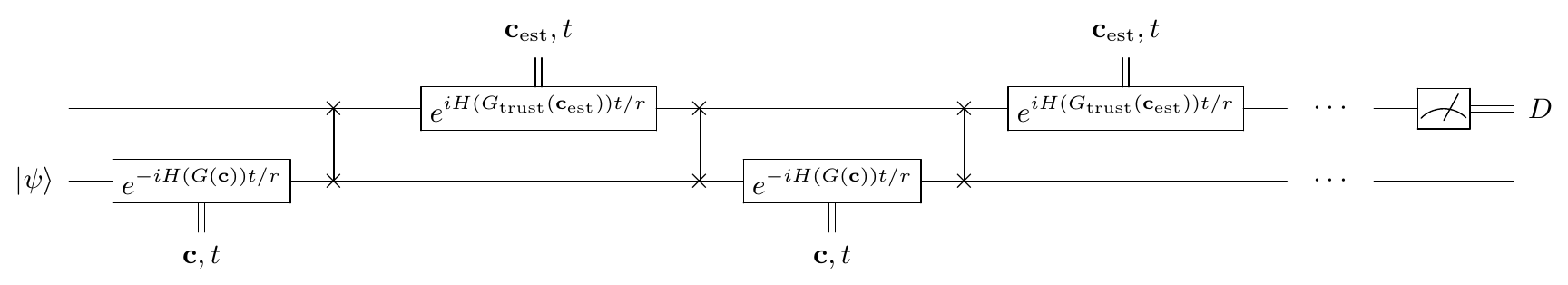}
    \caption{Quantum bootstrapping experiment to learn a control map
      using interactions between a trusted simulator and an
      uncharacterised device.  The value $r$ can be increased to
      combat the spread of non--local correlations and also to reduce
      ambiguities that can arise in the inference process when the
      untrusted and trusted Hamiltonians do not commute.}
    \label{fig:QBS}
\end{figure}
Another important application of randomised gap estimation is
calibration of quantum devices. In order to understand why calibration
can be a problem, imagine that you wish to calibrate a qubit so that
you can perform an $X$ gate within error $\epsilon = 10^{-5}$. In
order to verify, using with plain statistical sampling, that the error
is this small, roughly $10^{10}$ experiments are needed. This process
would also have to be repeated for every quantum bit in the system,
which could easily result in terabytes of data needed to calibrate a
small fault tolerant quantum device. Our approach provides a method
that can exponentially reduce the number of measurements needed to
control such a qubit under these circumstances and polynomially reduce
the experimental time needed.

In order to abstract the problem of calibrating an unknown quantum
device, let us consider the problem of learning a control map. We
define the control map to be the mapping between a set of experimental
controls (specified by the vector $\mathbf{c}$) and the system
Hamiltonian. Specifically, if we define $H(\mathbf{x})=\sum_{i=1}^N
x_i H_i$ for a set of Hamiltonians $H_i$, then $\mathbf{x}$ is given
by $\mathbf{c}$ via
\begin{equation}
\mathbf{x} = G\mathbf{c} + \mathbf{x}_0,
\end{equation}
where the matrix $G$ is known as the control map. For simplicity we
will only consider the linear case where $\mathbf{x}_0=0$ below. The
general case is handled by first learning $\mathbf{x}_0$ and is
discussed in~\cite{qp1}.

Recently, a method called quantum bootstrapping has been proposed to
learn such maps efficiently, wherein a trusted quantum simulator is
used to infer the controls of an untrusted device (see~\fig{QBS} for a
circuit diagram of a bootstrapping experiment). Although this can
allow the control map to be learned using a poly-logarithmic number of
experiments, it necessitates the use of swap gates that couple the
device to the trusted simulator that is used to characterise it. This
means that verifying that the trusted simulator is properly working
can still require an exponential number of measurements (even under
assumptions of locality). This is significant because the error in the
trusted simulator places a lower limit on the error that can be
attained using the experiments proposed in~\cite{qp1,qp2,qp3,qp4}.
\begin{figure}[t!]
  \centering
  \includegraphics[width=0.55\columnwidth]{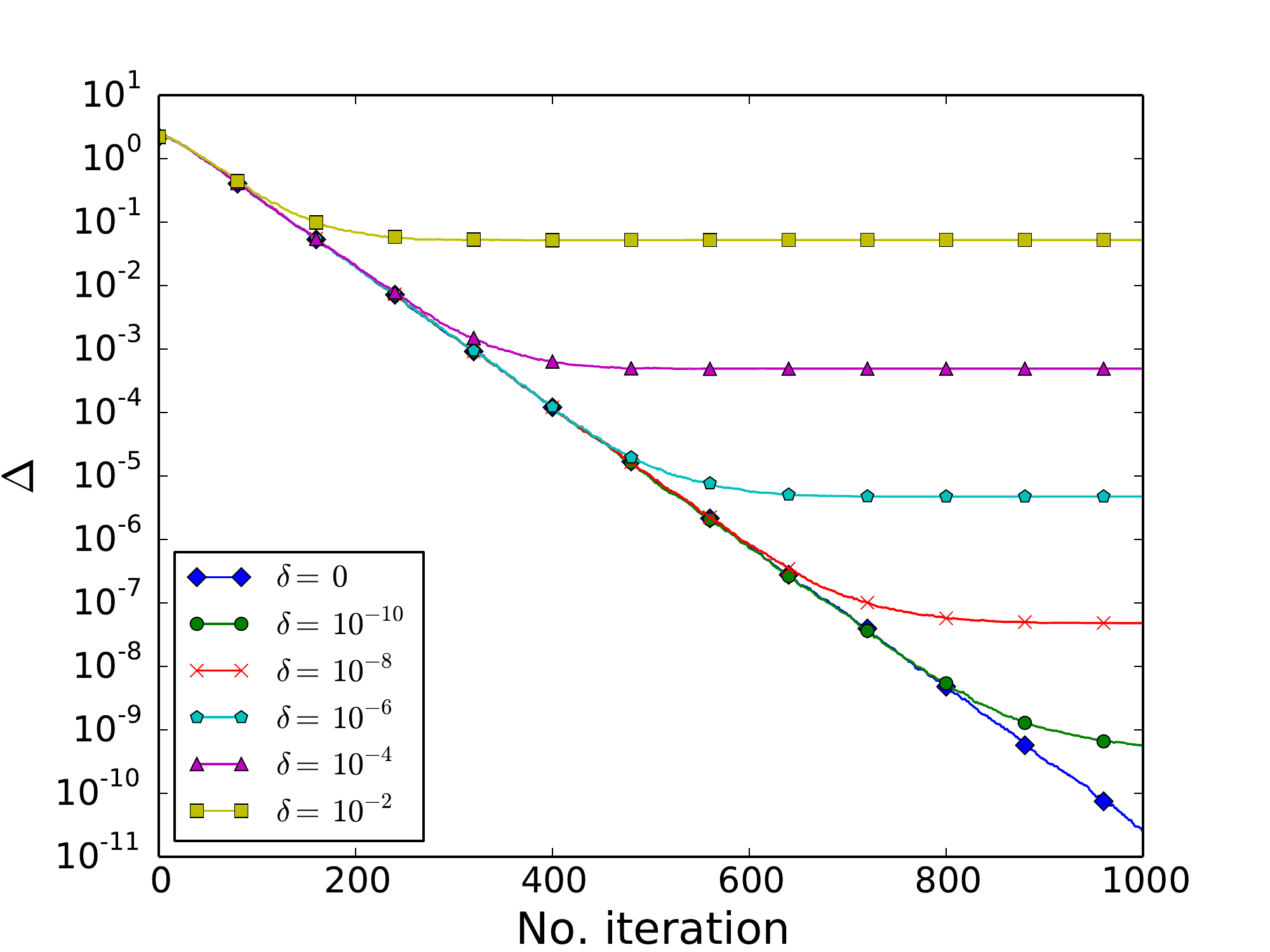}
  \caption{Median error for a quantum bootstrapping protocol that attempts to learn a control map for a universal qubit
    using a miscalibrated second qubit to control the measurement
    basis with miscalibrations of size $\delta$. These results are simulated in the limit as $r\rightarrow
    \infty$ (see~\fig{QBS}) as rejection sampling filtering has poor
    success probability for non--commuting models because of (intermediate multi-modality), unlike the
    sequential Monte-Carlo methods with $r=1$ studied in~\cite{qp3}.}
\label{fig:map}
\end{figure}
\fig{map} demonstrates the dependence of the error in the inference of
a control map for a single qubit that arises from the use of a
miscalibrated trusted simulator that has $G$ drawn from a zero--mean
Gaussian with covariance matrix $\delta \openone$, where $\delta$ is
chosen to take values within the interval $[0,10^{-2}]$. We see there
    that the error decays until it saturates at a level dictated by
    the miscalibrations in the trusted Hamiltonian.  This illustrates
    that errors in the trusted simulator can prevent a bootstrapping
    protocol from perfectly controlling this idealised quantum system.

It is worth noting that although Bayesian inference does not correctly
infer the true control map for the system because of these errors in
the trusted simulator, it does infer a set of controls that precisely
mimics the trusted device. In this sense, the learning task can be
thought of as a machine learning task, wherein the untrusted device is
trained to maximise the fidelity of its output with that of the
``trusted'' simulator, wherein the output state can be thought of as
the training data for the problem. It is therefore crucial to have a
well-calibrated trusted device if we wish to leverage the full power
of quantum bootstrapping.

To this end we provide a number of examples below that show how the
control maps of single qubit Hamiltonians can be learned using
randomised gap estimation.  We thereby show that our techniques may be
of great value in building a trusted simulator that can be used in
bootstrapping protocols.

\subsection{Two level example}
In the case for a two--level, single--qubit, system, the Hamiltonian
can be written, up to an irrelevant shift in energy, as
\begin{equation}
H([\alpha,\beta,\gamma])=\alpha X + \beta Y + \gamma Z.
\end{equation}
Since these three operations anti--commute it is evident that the
eigenvalues of $H$ are
\begin{equation}
E=\pm \sqrt{\alpha^2 +\beta^2 +\gamma^2}.
\end{equation}
Thus we can infer information about $\alpha$, $\beta$ and $\gamma$,
and in turn $G$, from the eigenspectra of different experiments.
Specifically,
\begin{equation}
\begin{bmatrix}
\alpha\\
\beta\\
\gamma
\end{bmatrix}
=
\begin{bmatrix}
G_{00} & G_{01} & G_{02}\\
G_{10} & G_{11} & G_{12}\\
G_{20} & G_{21} & G_{22}
\end{bmatrix}
\begin{bmatrix}
c_1\\
c_2\\
c_3
\end{bmatrix}.
\end{equation}

The simplest example of control map learning is the diagonal case, for
which
\begin{equation}
E= \pm \sqrt{G_{00}^2 c_1^2 +G_{11}^2 c_2^2 + G_{22}^2 c_3^2 }.
\end{equation}
and the control map can be learned, up to signs of $G_{ij}$, using a
sequence of three randomised phase estimation experiments with
$c=[1,0,0], [0,1,0]$ and $[0,0,1]$. The signs can be inexpensively
learned using auxiliary experiments because only a single bit of
information is required. Thus empirically the number of experiments
required to learn $G$ using randomised phase estimation is upper
bounded by a constant times $\log(1/\epsilon)$ (see
~\fig{bag_of_gaps}).

The next simplest case to consider is that of an upper triangular
control map with a positive diagonal in the form
\begin{equation}
\begin{bmatrix}
\alpha\\
\beta\\
\gamma
\end{bmatrix}
=
\begin{bmatrix}
G_{00} & G_{01} & G_{02}\\
0 & G_{11} & G_{12}\\
0 & 0 & G_{22}
\end{bmatrix}
\begin{bmatrix}
c_1\\
c_2\\
c_3
\end{bmatrix}.
\end{equation}
Clearly $G_{00}$ can be learned directly with randomised gap
estimation using the prior experiments. However, the remaining
elements must be inferred from different different experiments. If
$c=[1,1,0]$ then
\begin{align}
E([1,1,0])&=\pm \sqrt{(G_{00}+G_{01})^2+G_{11}^2}\\
E([0,1,0])&=\pm \sqrt{G_{01}^2+G_{11}^2}.\label{eq:010}
\end{align}
After squaring and subtracting both equations we find
\begin{equation}
E^2([1,1,0])-E^2([0,1,0]) = G_{00}^2 + 2G_{00}G_{01},
\end{equation}
which can be solved uniquely for $G_{01}$ since $G_{00}$ is known.
Once $G_{01}$ is known, $G_{11}$ can be learned unambiguously
from~\eq{010} using the fact that $G_{11}\ge 0$.

After these steps we have inferred the first two columns of $G$. The
remaining column can be learned similarly by performing three
randomised gap estimation experiments with $c=[0,0,1],[1,0,1]$ and
$[0,1,1]$ which yield
\begin{align}
E([0,1,1])&=\pm \sqrt{(G_{01}+G_{02})^2+(G_{11}+G_{12})^2 +
  G_{22}^2}\label{eq:011}\\
E([1,0,1])&=\pm \sqrt{(G_{00}+G_{02})^2+G_{12}^2 +
  G_{22}^2}\label{eq:101}\\
E([0,0,1])&=\pm \sqrt{G_{02}^2+G_{12}^2 + G_{22}^2}\label{eq:001}
\end{align}
Then by subtracting the square of~\eq{001} from~\eq{101} we learn
$G_{02}$, from which we learn $G_{12}$ from substituting the result
into~\eq{011}.  $G_{22}$ can then be learned by subtracting the square
of~\eq{101} from~\eq{011}, substituting the value of $G_{02}$ and
using $G_{22}\ge 0$.

More generally, this approach provides information about the inner
product between any two columns of $G$. No more information about $G$
can be extracted with such measurements. That is, $G^TG$ can be
learned, but $G$ itself is only determined up to an orthogonal
transformation $G^\prime = QG$ which preserves the inner products of
its columns. The matrix $Q$ can further be determined only if
additional constraints are imposed on $G$. For example, if $G$ is
upper or lower triangular and with a positive diagonal it is unique
within such measurements and $Q$ is the identity matrix as discussed
above.

\subsection{Learning $2\times 2$ single qubit control maps}
As mentioned above, gap estimation experiments alone are in general
not enough to uniquely specify the control map. In fact, amplitudes of
the states in the eigenbasis are required. This raises the question of
whether amplitude estimation may be used to glean the necessary
information from these coefficients using $O(\log(1/\epsilon))$
measurements.

For simplicity lets consider the Hamiltonian of a single qubit with
only $X$ and $Z$ rotations
\begin{equation}
H([\alpha,\gamma])=\alpha X + \gamma Z.
\end{equation}
and
\begin{equation}
\begin{bmatrix}
\alpha\\
\gamma
\end{bmatrix}
=
\begin{bmatrix}
G_{00} & G_{01}\\
G_{10} & G_{11}\\
\end{bmatrix}
\begin{bmatrix}
c_1\\
c_2
\end{bmatrix}.
\end{equation}
The case of a full $3\times 3$ control map is similar, but we do not
discuss it here because it is much more cumbersome owing to the
increased number of equations and unknowns.

Randomized gap estimation can provide
\begin{align}
E([1,0]) &= \pm \sqrt{G_{00}^2 + G_{10}^2}\nonumber\\
E([0,1]) &= \pm \sqrt{G_{01}^2 + G_{11}^2}\nonumber\\
E([1,1]) &= \pm \sqrt{(G_{00 }+G_{01})^2 + (G_{10}+G_{11})^2},
\label{eq:Eeqns}
\end{align}
Since there are four unknowns and three equations, the control map
cannot be unambiguously learned from these relations. However, it can
be seen that any other such randomised gap estimation provides
information that is not linearly independent of these three equation.
We need additional information that cannot directly come from
randomised gap estimation to solve the problem. This problem was
avoided in the prior example by constraining $G_{10}$ to be zero.

We can learn the required final piece of information by measuring
expectation values of the following three operators
\begin{align}
A([1,0])&=e^{-i(G_{00}X+G_{10}Z)\pi/(2E([1,0]))},\nonumber\\
A([0,1])&=e^{-i(G_{01}X+G_{11}Z)\pi/(2E([0,1]))},\nonumber\\
A([1,1])&=e^{-i((G_{00}+G_{01})X+(G_{10}+G_{11})Z)\pi/(2E([1,1]))}.
\end{align}
These each correspond to free evolution of the system Hamiltonian
under particular control settings for fixed durations.  We can then
use these operators (see eq~\eq{Aeq}) to learn the following
amplitudes while marking the $\ket{0}$ state
\begin{align}
  |a([1,0])|&=\left|\bra{0} A([1,0]) |\ket{0}\right| = \left|\frac{G_{10}}{E([1,0])}\right|,
  \\ |a([0,1])|&=\left|\bra{0} A([0,1]) \ket{0}\right| = \left|\frac{G_{11}}{E([0,1])}\right|,
  \\ |a([1,1])|&=\left|\bra{0} A([1,1])|\ket{0}\right| = \left|\frac{G_{10}+G_{11}}{E([1,1])}\right|.
\end{align}
If the sign of $G_{11}$ or $G_{10}$ is known, these three quantities
are enough to unambiguously solve for $G_{11}$ and $G_{10}$, and can
all be learned in-place using randomised amplitude estimation if an
accurate $Z$--gate can be performed.  The remaining elements of the
control map can then be found from~\eq{Eeqns}.  If the sign of
$G_{11}$ or $G_{10}$ is not known, they can still be learned with
amplitude estimation if the states $(\ket{0} \pm i\ket{1})/\sqrt{2}$
can be prepared on the quantum device.

In the absence of a calibrated $Z$ gate, the above strategy of amplitude
estimation does not apply. The requisite quantities $|a([0,1])|,
|a([1,0])|$ and $|a([1,1])|$ can nonetheless be learned by statistical
sampling. Doing so requires $O(1/\epsilon^2)$ experiments, which
dominates the cost of data acquisition. However, randomised gap
estimation efficiently provides $E([0,1]), E([1,0])$ and $E([1,1])$ which
allows us to optimally extract these expectation values (for the form
of the experiments considered). To see this consider the case of
learning $|a([0,1])|$ for arbitrary $t$. The likelihood function is
\begin{equation}
P(0|G;t)= \cos^2(E([0,1]) t)+ \sin^2(E([0,1])t)|a([0,1])|^2.
\end{equation}
It is then clear that the Fisher information is maximised at
$t=\pi/(2E([0,1]))$.  In order to compare, let us consider the case
where $E([0,1])t \mod 2\pi$ is taken from a uniform prior over
$0,2\pi$. Then the marginalised likelihood is
\begin{equation}
\frac{1}{2\pi} \int_0^{2\pi} \mathrm{d}x \cos^2(x) +\sin^2(x)
|a([0,1])|^2 = \frac{|a([1,0])|^2+1}{2}.
\end{equation}
thus~\eq{fisher} shows that the Fisher information for such
experiments is reduced by a factor of $2$. This furthermore reduces
the inference problem to frequency estimation wherein the
non--adaptive strategy of estimating $|a([0,1])|^2$ using the sample
frequency is optimal. Thus, even though randomised gap estimation does
not lead to a quadratic reduction in the experimental time needed to
learn $G$ here, it dramatically simplifies the inference problem and
removes the need to use multiple experimental settings to learn the
amplitudes.

Above we only discussed the case of a single qubit Hamiltonian with
uncalibrated $X$ and $Z$ interactions. The case of a general
(traceless) single qubit Hamiltonian follows from exactly the same
reasoning. First, one applies randomised gap estimation to learn the
relevant energies corresponding to the $6$ linearly independent sets
of controls. The remaining $3$ equations are found by measuring
appropriate expectation values using amplitude estimation if a $Z$
gate is available, and statistical sampling if it is not.

Finally, higher dimensional control maps can also be learned using
this approach. The analytic approach given above no longer holds
unless the Hamiltonian can be decomposed into a set of anti--commuting
Pauli operators. The same process can, however, be applied numerically
in such cases. We leave a detailed discussion of the issues that arise
when doing so for future work.

\section{Conclusion}
\label{sec:conclusion}
We introduced an alternative form of phase estimation that uses
Bayesian inference on top of randomised experiments to learn the
spectral gaps of a unitary operation without needing external qubits
or well-calibrated gates. In addition to calibration, this gap
estimation procedure allows amplitude estimation to be performed in
place for quantum systems without needing auxiliary qubits to control
the evolution.  These randomised gap and amplitude estimation
algorithms could radically reduce the costs of characterising and
calibrating small quantum devices. Our work further suggests that they
may find use in building the initial trusted simulator that is
required to start a quantum bootstrapping protocol~\cite{qp1} as well
as allowing amplitude estimation to be done in place in small quantum
systems.

An important caveat of our work is that the randomisation procedure that
we use to extract information about the gaps leads to an exponential decrease
in the signal yielded by experiments on high--dimensional systems.
We show that this can be mitigated, for some systems, using adiabatic
paths to allow the randomisation to be performed only within a low--dimensional eigenspace
of the system.  Regardless, this caveat still holds in general and as such
our methods are complimentary to existing phase and amplitude estimation
methods and do not replace them.  

This suggests an important remaining question namely that of whether efficient methods exist for estimating phase and amplitude
that do not require entangling the system with ancillary qubits.  The development of such methods would constitute a major step
forward for quantum metrology.  Similarly, proof that such methods are impossible would give us new insights into the
efficacy of phase estimation and reveal fundamental trade offs that exist between quantum resources for metrological tasks.

Finally, it should be noted that our work simply uses information from the Haar averaged probability distribution to infer the eigenvalue gaps.
In practice, additional information may be present in the higher order moments of the distribution.  Finding practical ways to exploit information carried by these higher moments about the phase remains an open problem.

\acknowledgments

We thank Matt Hastings and Shawn Cui for valuable comments and
introducing us to the turnpike problem as well as Michael Beverland, Alex Kubica, Christopher Granade and Paolo Zanardi for providing useful feedback on this work.

\appendix

\section{Numerics}
\label{app:numerics}
Here we provide detailed information about the numerical experiments
performed to generate~\fig{bag_of_gaps}. As mentioned in the main body
we use sets of eigenvalues as our hypotheses for Bayesian
inference. Without loss of generality we can shift the eigenspectrum
such that $\lambda_1 = 0$ and we hence have $N-1$ variables, where $N$
is the number of levels.

For the Bayesian inference step, we used rejection filtering with a
threshold of $10000$ accepted samples.  We found that drawing
parameters from the model until $10000$ samples are accepted provides
sufficient accuracy to represent the likelihood function in these
experiments.  We use a standard unimodal multivariate Gaussian as a
model for the prior distribution on hypothesis space. While this
unimodal distribution is incapable of representing a multi--modal
distribution (which may occur in cases where degenerate solutions
exist), multimodal models can also be used to improve
learning. However, for our experiments the mean and covariance of all
modes quickly converged to the same value and the accuracy was hence
no better than in the unimodal case.

The initial prior distribution is taken to be a Gaussian with mean and
standard deviation set to be those appropriate for Gaussian Unitary
Ensemble (GUE) Hamiltonians. In particular, we find these values using
the gap distribution of GUE Hamiltonians and set the mean eigenvalue
to be that found by adding $N-1$ random gaps drawn from the
distribution appropriate for GUE Hamiltonians to $\lambda_1=0$.
Similarly the covariance matrix for the $N-1$ unknown eigenvalues was
set to be variances equivalent to those found for each of the
eigenvalues of GUE Hamiltonians.  While this distribution does not
accurately model the true prior distribution of the eigenvalues, it
has sufficient overlap to be able to zero in on the correct
eigenvalues (up to degeneracies that cannot be resolved by randomised
gap estimation alone).

Evolution times are chosen to be $t = 1/(2\sigma)$, where $\sigma$ is
the uncertainty defined in~\eq{uncertainty}. This heuristic was
introduced in previous works~\cite{qp2,qp3,qp1} as the particle guess
heuristic and is shown to be nearly optimal for frequency estimation
problems. The prefactor of $1/2$ was found to work well empirically,
but more optimal choices may exist for this problem.

Since degeneracies in the eigenvalues can emerge in our problem that
prevent correct inference of the gaps, we resort to modifying the
likelihood function in order to forbid particular Hamiltonian models.
Specifically, we force the eigenvalues to be sorted and positive by
introducing an additional sign-term in the likelihood function
\begin{gather}
\langle L \rangle = \frac{2}{N+1}\left( 1 +
\frac{1}{N}\sum_{i>j}\textnormal{sign}(\Delta_{ij})\cos{(\Delta_{ij}t)}
\right)
\end{gather}
where the sign-function is $+1$ if its argument is positive and $-1$
otherwise and $\Delta_{ij}=\lambda_i - \lambda_j$. Hence, this
likelihood will align with the experimental likelihood at all
time-steps $t$ only if $\lambda_i > \lambda_j$. As this modified
likelihood can for some sets of parameters be larger than unity or
smaller than zero, we do an additional truncation step such that
$0 \leq L \leq 1$.

The median error and uncertainty reported are computed across 1000
instances drawn from the Gaussian Unitary Ensemble.  This number was
sufficient to make the sample error in the estimates graphically
insignificant.

\section{The turnpike problem for unique gaps}
\label{app:turnpike}
Eigenvalues are not directly provided in randomised phase
estimation. Instead the eigenvalues for the system must be inferred
from the gaps. In the main body, as well as the numerical experiments,
we use Bayesian inference to infer the eigenvalues from the
data. However, it remains an open question under what circumstances it
can successfully infer the eigenvalues, up to a constant shift, from
the gaps. The problem of inferring a set of numbers given only the
differences between them is well known in computer science and is
called the turnpike problem~\cite{SSL90}. The turnpike problem has
proven difficult to solve in general and the best algorithm to find a
solution remains an open problem. However, for the typical case,
wherein the gaps are unique, there are at most two degenerate choices
of spectra that could equivalently describe the system, as we show
below.
\begin{theorem}
Let $\{\lambda_1:=0 <\lambda_2 <\cdots<\lambda_N\}$ be a set such that
for any $i\ne k$ $(\lambda_{i+1}-\lambda_i )\ne
(\lambda_{k+1}-\lambda_k )$. Given $\Delta: \Delta_{xy} =
\lambda_{p_x}-\lambda_{q_y}$ for unknown permutations of
$\{1,\ldots,N\}$ $p$ and $q$, there are only two possible solutions
for the values of $\lambda$.
\end{theorem}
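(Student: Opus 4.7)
The plan is to exhibit two natural candidate solutions explicitly and then rule out any third solution via a greedy turnpike-style reconstruction argument. The two obvious candidates are the given spectrum $\{\lambda_1,\ldots,\lambda_N\}$ itself and its reflection $\tilde\lambda_i := \lambda_N - \lambda_{N+1-i}$, which satisfies $\tilde\lambda_1 = 0$ and $\tilde\lambda_N = \lambda_N$. A direct computation yields $\tilde\lambda_j - \tilde\lambda_i = \lambda_{N+1-i} - \lambda_{N+1-j}$, so both spectra generate the same multiset of pairwise (unsigned) differences and are therefore both consistent with the observed $\Delta$.

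To show no further solution is possible, I would argue inductively using greedy placement from the extremes inward. Any candidate spectrum $\{\mu_i\}$ normalised so that $\mu_1 = 0$ must satisfy $\mu_N = \max\Delta = \lambda_N$, which is forced. At later stages, having placed $s$ extremal points on the left and $t$ on the right, the largest entry $d^\star$ of $\Delta$ not yet accounted for by pairs of placed points equals $\max(\lambda_N - \lambda_{s+1}, \lambda_{N-t})$, and determines either $\mu_{s+1} = \lambda_N - d^\star$ (left extension) or $\mu_{N-t} = d^\star$ (right extension). This is the only a priori source of ambiguity, producing at most $2^{N-2}$ candidate branches.

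The heart of the proof is then to show that only two of these branches --- the one that consistently tracks $\{\lambda_i\}$ from both sides and the one that consistently tracks $\{\tilde\lambda_i\}$ from both sides --- produce multisets compatible with $\Delta$. I would maintain the invariant that after each stage the set of placed points coincides with the extremal portion of either $\{\lambda_i\}$ or $\{\tilde\lambda_i\}$, and argue that any ``mixed'' branch which switches sides at some stage creates a contradiction. Concretely, the first side-switch forces the consecutive gap between the newly placed point and its already-placed neighbour on the same side to equal a consecutive gap of the true spectrum at a different location; this duplication of values in $\{g_i\} = \{\lambda_{i+1} - \lambda_i\}$ violates the distinctness hypothesis on consecutive gaps.

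The main obstacle is formalising the ``mixed-branch'' contradiction precisely, because each wrong-side placement alters which entries of $\Delta$ remain and propagates into all subsequent placements. I expect the cleanest bookkeeping is a multiset induction that tracks, after each stage, the multiplicity with which each remaining value appears in $\Delta$ and compares it to the multiplicity predicted by the candidate spectrum on each branch; the distinct-consecutive-gap hypothesis then enters exactly at the step that forces one of the $g_i$ to be duplicated in a mixed branch, yielding the required incompatibility with $\Delta$ and leaving only the two announced solutions.
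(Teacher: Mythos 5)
Your opening moves are fine: the reflection $\tilde\lambda_i=\lambda_N-\lambda_{N+1-i}$ does reproduce the gap data, and the place-from-the-extremes scheme is the standard turnpike backtracking. But the entire content of the theorem sits in the step you defer --- that every ``mixed'' branch dies --- and the mechanism you propose for it does not engage the hypothesis correctly. The assumption forbids two consecutive gaps of the \emph{true} spectrum from coinciding; it does not forbid a consecutive gap of a candidate spectrum from coinciding with a differently located consecutive gap of the true spectrum, which is all your ``duplication of values in $\{g_i\}$'' argument produces. So no contradiction follows from the stated hypothesis, and the multiset bookkeeping you say you expect to work is precisely the missing proof, not a routine formalisation.

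Worse, the gap cannot be closed along these lines, because distinct consecutive gaps are too weak to kill mixed branches. Take the classical homometric pair $A=\{0,1,4,10,12,17\}$ and $B=\{0,1,8,11,13,17\}$: both have the same multiset of pairwise gaps $\{1,2,\dots,13,16,17\}$, each has all consecutive gaps distinct (indeed all fifteen gaps are distinct in each), and $B$ is neither $A$ nor its reflection $\{0,5,7,13,16,17\}$. Running your greedy reconstruction on this common gap multiset, after placing $0$, $17$ and $1$ the largest unaccounted value is $13$, and \emph{both} resulting branches (new point $4$ or new point $13$) complete consistently, terminating in $A$ and $B$ respectively; the invariant you want to maintain (placed points always form an extremal segment of $\lambda$ or $\tilde\lambda$) is simply false for this data. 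Any correct argument must therefore use more than the stated hypothesis or restrict the regime (e.g.\ very small $N$). Note also that the paper's own route is different from yours --- it first tries to isolate the set of nearest-neighbour gaps by arguing they cannot be written as sums of other gaps, and then peels the spectrum downward from the largest and second-largest gaps --- but it leans on the same kind of exclusion step that your sketch leaves open, so imitating it would not supply the missing piece.
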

\begin{proof}
The nearest neighbour gaps can be isolated from this set easily.
Consider $\lambda_{i+2}-\lambda_i= \lambda_{i+2} - \lambda_{i+1} +
\lambda_{i+1} - \lambda_j$. Since $\lambda_{i+2}-\lambda_{i+1}$ and
$\lambda_{i+1}-\lambda_i$ are in $\Delta$ it follows that such a gap
can be decomposed into a sum of gaps in the set.  Similarly, by
induction $\lambda_{i+x} - \lambda_i$ can be decomposed into a sum of
nearest neighbour gaps for all $x\ge 2$.  In contrast, the nearest
neighbour difference $\lambda_{i+2}-\lambda_{i+1}$ cannot be decomposed
into a sum of other gaps because of our assumptions of uniqueness.
This means that $\Delta^1$, the set of all nearest neighbour gaps, can
be uniquely identified from this data.

It remains to show that there are only two ways that these gaps can be
laid out to form a spectrum that is  consistent with the data in
$\Delta$. The quantity $\Delta_{xy}$ $\lambda_N$ can be identified from the
spectrum since it is simply the largest gap under the assumption that
$\lambda_1=0$. The second largest gap can also be uniquely found from
this data. There are two possibilities. Either the second largest gap
is $\lambda_{N-1}$ or it is $\lambda_N - \lambda_2$.  This can be seen
by contradiction.  Imagine $\lambda_N - \lambda_j$ is the second
largest gap for $j \ge 3$.  Since $\lambda_j$ is a monotonically
increasing sequence, $\lambda_N -\lambda_j \le \lambda_N-\lambda_2 \le
\lambda_N$.  Therefore, it cannot be the second largest gap.  Applying
the same argument to $\lambda_{N-1} - \lambda_j$ for $j\ge 2$ leads to
the conclusion that $\lambda_{N-1}$ is the only other candidate for
the second largest gap.  Similarly $\lambda_{x} -\lambda_j \le
\lambda_{N-1} - \lambda_j$ for all $x\le N-1$ so these are the only
two possible candidates.

Assume that the second largest gap is $\lambda_{N-1}$.
Then $$\lambda_{N-2} = (\lambda_{N-2} - \lambda_{N-1}) +
\lambda_{N-1}.$$ Seeking a contradiction, imagine there is an
alternative solution to this such that
$$\lambda_{p}-\lambda_{N-1} = (\lambda_{r-1} - \lambda_{r}).$$ If
$\lambda_p -\lambda_{N-1}$ is not a nearest--neighbour gap, we have a
contradiction, since the set of nearest--neighbour gaps is known and
each gap is unique. Hence, we can immediately deduce that $\lambda_p$
is not a solution unless $p\in \{N-2,N\}$. Given $\lambda_p
-\lambda_{N-1}$ is a nearest neighbour gap we must have that $p=N-2$
or $p=N$. Since the nearest neighbour gaps are unique, $\lambda_{N-2}
-\lambda_{N-1} \ne \lambda_{N-1} - \lambda_N$ and since we already
know $\lambda_{N}-\lambda_{N-1}$ this confusion is impossible.
Furthermore, the uniqueness of the nearest neighbour gaps imply that
the only possible solution is $r=N-1$. Thus 
$\lambda_{N-2}$ is uniquely determined.

This process can be repeated using $\lambda_{p}$ and $\lambda_{p+1}$
for any $P<N$. Therefore, given the second largest gap is
$\lambda_{N-1}$, the spectrum can be uniquely determined under our assumptions.

Now let us assume the second largest gap is
$\lambda_{N}-\lambda_2$. Then $\lambda_2$ is known because
$\lambda_{N}$ is the largest gap, given $\lambda_1 = 0$.  By repeating the exact same argument as above we see
that $\lambda_{3}$ is uniquely determined by these two values and the
uniqueness of the nearest neighbour gaps.  This same argument can be
extended leading to the conclusion that if $\lambda_{2}$ is
$\lambda_N$ minus the second largest gap then the spectrum is also
unique. There are hence at most two possible spectra based on this
data.
\end{proof}
This approach shows that if the gaps are unique, a solution to the
turnpike problem is also unique up to a reflection of the spectrum
about its midpoint. However, in the presence of degenerate gaps the
number of possible solutions could scale slightly faster than linear
in $N$~\cite{SSL90} and the eigenspectrum can in this case only
narrowed down to be one of these solutions. Moreover, while a
backtracking algorithm can find a solution for the typical case in
time $O(N^2\log(N))$, the worst case complexity is $O(2^N N
\log(N))$~\cite{SSL90}. This suggests that Bayesian inference will not
always find a solution in time that is polynomial in $N$, which itself
is exponential in the number of qubits. However, for the two level
spectrum of a single qubit, the phase can be determined unambiguously.

\bibliographystyle{unsrt}
\bibliography{paper}
\end{document}